\title{Incomplete Descriptions and Qualified Definiteness}
\author{Bartosz Wi\k{e}ckowski
\institute{Goethe University\thanks{Support by the Deutsche Forschungsgemeinschaft (DFG), grant WI 3456/5-1, is gratefully acknowledged.}
\\ Frankfurt am Main, Germany}
\email{wieckowski@em.uni-frankfurt.de}
}
\begin{document}
\theoremstyle{plain} 
\theoremstyle{definition} 
\theoremstyle{remark} 
\newtheorem{convention}{Convention}[section]
\newtheorem{corollary}{Corollary}[section]
\newtheorem{definition}{Definition}[section]
\newtheorem{digression}{Digression}[section]
\newtheorem{example}{Example}[section]
\newtheorem{illustration}{Illustration}[section]
\newtheorem{lemma}{Lemma}[section]
\newtheorem{proposition}{Proposition}[section]
\newtheorem{theorem}{Theorem}[section]
\newtheorem{remark}{Remark}[section]

\maketitle

\begin{abstract}
According to Russell, strict uses of the definite article \textquoteleft the\textquoteright\space 
in a definite description \textquoteleft the $F$\textquoteright\space involve uniqueness; 
in case there is more than one $F$, \textquoteleft the $F$\textquoteright\space 
is used somewhat loosely, and an indefinite description \textquoteleft an $F$\textquoteright\space 
should be preferred. 
We give an account of constructions of the form \textquoteleft the $F$ is $G$\textquoteright\space 
in which the definite article is used loosely 
(and in which \textquoteleft the $F$\textquoteright\space is, therefore, incomplete), 
essentially by replacing the usual notion of identity in Russell's uniqueness clause 
with the notion of qualified identity,    
i.e., \textquoteleft $a$ is the same as $b$ in all $\mathcal{Q}$-respects\textquoteright, 
where $\mathcal{Q}$ is a subset of the set of predicates $\mathcal{P}$. 
This modification gives us qualified notions of uniqueness and definiteness. 
A qualified definiteness statement   
\textquoteleft the $\mathcal{Q}$-unique $F$ is $G$\textquoteright\space  
is strict in case $\mathcal{Q} = \mathcal{P}$ and 
loose in case $\mathcal{Q}$ is a proper subset of $\mathcal{P}$. 
The account is made formally precise in terms of proof theory and proof-theoretic semantics. 

\medskip
Keywords: definiteness, incomplete descriptions, proof-theoretic semantics, uniqueness
\end{abstract}

\section{Introduction}

Sometimes we use the definite description \textquoteleft the $F$\textquoteright\space 
in cases in which there is a unique $F$.  
According to Russell (\cite{Russell}: 481), the definite article 
\textquoteleft the\textquoteright\space is used strictly in such cases.  
For example, speaking about Francis, 
we use \textquoteleft the pope\textquoteright\space in (1.1) in this way. 
\begin{itemize}
\item[(1.1)] The pope is bald. 
\end{itemize}
Sometimes, as Russell notes, we use \textquoteleft the $F$\textquoteright\space also in cases, 
in which there is more than one $F$. 
For example,  
\textquoteleft the bishop\textquoteright\space in (1.2) 
is used in this loose way (as would be \textquoteleft the pope\textquoteright\space during a schism). 
\begin{itemize}
\item[(1.2)] The pope blesses the bishop.  
\end{itemize}
According to Russell, such loose uses of \textquoteleft the $F$\textquoteright\space 
should be avoided in favour of the indefinite description \textquoteleft an $F$\textquoteright. 

In this paper, we propose a formal account of both uses of 
\textquoteleft the $F$\textquoteright\space in terms of qualified definiteness. 
On a Russellian analysis, a construction of the form 
\textquoteleft the $F$ is $G$\textquoteright\space  
is explained in terms of an existence, a uniqueness, and a predication clause: 
\begin{itemize}
\item[(E)] There is at least one $F$. 
\item[(U)] There is at most one $F$. 
\item[(P)] Every $F$ is $G$. 
\end{itemize}
We modify this analysis mainly by replacing the usual notion of identity in the 
definition of uniqueness with the notion of \textit{qualified identity} proposed in \cite{negpredd},   
i.e., \textquoteleft $a$ is the same as $b$ in all $\mathcal{Q}$-respects\textquoteright, 
where $\mathcal{Q}$ is a subset of the set of predicates $\mathcal{P}$. 
The notion of \textit{qualified uniqueness} that results from this replacement says: 
\begin{itemize}
\item[(QU)] For every $x$ and $y$, if they are $F$, then they are identical with respect to every predicate in $\mathcal{Q}$. 
\end{itemize}
Finally, a statement of \textit{qualified definiteness} says, 
combining the three Russellian components:  
\begin{itemize}
\item[(QD)]  The $\mathcal{Q}$-unique $x$ which is $F$ is $G$. 
\end{itemize}
Qualified definiteness, unlike standard definiteness, allows for fine-tuning. 
Let $\mathcal{Q}^{\prime}$ be a proper subset of $\mathcal{P}$ 
(i.e., $\mathcal{Q}^{\prime} \subset \mathcal{P}$). 
If $\mathcal{Q} = \mathcal{P}$ in (QD), then we get the reading 
\textquoteleft the \textit{only} $x$ which is $F$ is $G$\textquoteright.  
We may use this reading only in case there is a single $x$ that is $F$. 
This is definiteness proper. 
If, on the other hand, we put $\mathcal{Q} = \mathcal{Q}^{\prime}$, then we get: 
\textquoteleft the $x$ which is $F$ is $G$\textquoteright.  
We may use this reading only in case there are at least two things 
which are $F$ that are indiscernible with respect to $\mathcal{Q}^{\prime}$, 
but discernible with respect to $\mathcal{P}\setminus\mathcal{Q}^{\prime}$. 
This is restricted definiteness. 
What is subject to restriction, on this account, is thus the set 
of $\mathcal{Q}$-respects (rather than, e.g., a domain of quantifiers \cite{QDR}).   

Below, we provide the details of this proposal. 
It will differ from competing semantic analyses of incomplete descriptions also 
in that it will be couched in a framework of proof-theoretic semantics 
(see \cite{PSH} for an overview) 
rather than in some version of model-theoretic semantics. 
(For an overview of the literature on incomplete descriptions 
see, e.g., \cite{Ludlow}: sect. 5.3. 
An elaborate model-theoretic account is \cite{Elbourne}.) 

Sect. 2 defines the formal language.  
Sect. 3 recapitulates the relevant fragment of the intuitionistic bipredicational 
natural deduction systems 
defined in \cite{negpredd} and combines it with the rules for definiteness proposed 
in \cite{iota1}, \cite{iota2} into proof systems for qualified definiteness,  
establishing normalization and the subexpression (and subformula) property for them. 
Sect. 4 defines a proof-theoretic semantics for qualified definiteness, and 
Sect. 5 applies this semantics to incomplete descriptions in the manner suggested above. 
The paper ends with a brief outlook in Sect. 6.

\section{The language}

We extend the bipredicational 
language $\mathcal{L}$ motivated and defined in \cite{negpredd} 
with contextually defined operators for qualified definiteness  
and call the extended language $\mathcal{L}\iota$.

$\mathcal{L}$ is a first-order language. 
It is bipredicational, since it allows for both predication and predication failure. 
We first recapitulate those parts of its definition which are relevant for present purposes. 

\begin{definition}
$\mathcal{C}$ is the set of individual (or nominal) constants (form: $\alpha_{i}$) 
and 
$\mathcal{P}$ is the set of $n$-ary predicate constants (form: $\varphi^{n}_{i}$) 
of $\mathcal{L}$.  
Moreover, $Atm$ is the set of atomic sentences (form: $\varphi^{n}\alpha_{1} ... \alpha_{n}$) 
of $\mathcal{L}$. 
$Atm(\alpha) =_{def} \{A \in Atm: A$ contains at least one occurrence of $\alpha \in \mathcal{C}\}$  
and 
$Atm(\varphi^{n}) =_{def} \{A \in Atm: A$ contains an occurrence of $\varphi^{n} \in \mathcal{P}\}$. 
	A nominal term $o_{i}$ is either a nominal constant or a nominal variable $x_{i}$. 
Atomic formulae have the form $\varphi^{n}o_{1} ... o_{n}$ and are used for predication. 
Negative predications (or predication failures) 
take the form $-\varphi^{n}o_{1} ... o_{n}$ 
(reading: 
\textquoteleft the ascriptive combination of 
$\varphi^{n}$ with $o_{1}, ..., o_{n}$ fails\textquoteright). 
\end{definition}

\begin{definition}
Defined symbols of $\mathcal{L}$:  
\begin{enumerate}
\item $\neg A =_{def} A\supset\bot$ (negation) 
\item $A \leftrightarrow B =_{def} (A\supset B) \& (B\supset A)$ (equivalence)  
\item Let $\varphi^{n}$ be an $n$-ary predicate constant. 
\begin{quote}
$P^{n}_{\varphi^{n}}(o_{1}, o_{2}) =_{def}$  
\begin{quote}
$\forall z_{1} ... \forall z_{n-1}\forall z_{n}$ 
$((\varphi^{n}o_{1}z_{2} ... z_{n}$ 
$\leftrightarrow$ 
$\varphi^{n}o_{2}z_{2} ... z_{n})$\\
$\&$
$(\varphi^{n}z_{1}o_{1} ... z_{n}$ 
$\leftrightarrow$ 
$\varphi^{n}z_{1}o_{2} ... z_{n})$\\
$\&$ ... $\&$ 
$(\varphi^{n}z_{1} ... z_{n-1}o_{1}$ 
$\leftrightarrow$ 
$\varphi^{n}z_{1} ... z_{n-1}o_{2}))$ 
\end{quote}
\end{quote}
\begin{quote}
$N^{n}_{\varphi^{n}}(o_{1}, o_{2}) =_{def}$  
\begin{quote}
$\forall z_{1} ... \forall z_{n-1}\forall z_{n}$ 
$((-\varphi^{n}o_{1}z_{2} ... z_{n}$ 
$\leftrightarrow$ 
$-\varphi^{n}o_{2}z_{2} ... z_{n})$\\
$\&$
$(-\varphi^{n}z_{1}o_{1} ... z_{n}$ 
$\leftrightarrow$ 
$-\varphi^{n}z_{1}o_{2} ... z_{n})$\\
$\&$ ... $\&$ 
$(-\varphi^{n}z_{1} ... z_{n-1}o_{1}$ 
$\leftrightarrow$ 
$-\varphi^{n}z_{1} ... z_{n-1}o_{2}))$ 
\end{quote}
\end{quote}
Let $\varphi^{k_{1}}_{1}, ... , \varphi^{k_{m}}_{m}$ be all the predicate constants 
in $\mathcal{Q}$, where $\varphi_{i}$ is $k_{i}$-ary and $\mathcal{Q}\subseteq\mathcal{P}$. 
\begin{quote}
\textit{Positive qualified identity}:
\begin{quote}
$o_{1}\overset{+}{=}_{\mathcal{Q}}o_{2} =_{def} 
P^{k_{1}}_{\varphi_{1}}
(o_{1}, o_{2})$ 
$\&$ ... $\&$
$P^{k_{m}}_{\varphi_{m}}
(o_{1}, o_{2})$\\
(\textquoteleft $o_{1}$ is the same as $o_{2}$ in all $\mathcal{Q}$-respects\textquoteright)
\end{quote}

\textit{Negative qualified identity}:
\begin{quote}
$o_{1}\overset{-}{=}_{\mathcal{Q}}o_{2} =_{def} 
N^{k_{1}}_{\varphi_{1}}
(o_{1}, o_{2})$ 
$\&$ ... $\&$
$N^{k_{m}}_{\varphi_{m}}
(o_{1}, o_{2})$\\
(\textquoteleft $o_{1}$ is the same as $o_{2}$ in no $\mathcal{Q}$-respect\textquoteright)
\end{quote}
\end{quote}
\end{enumerate}
\end{definition}

\begin{remark}
Note that, in contrast to $\neg$, the operator for predication failure $-$ is primitive. 
Moreover, unlike the former, it is sensitive to the internal structure of the formula to 
which it is prefixed. 
\end{remark}

$\mathcal{L}\iota$ extends $\mathcal{L}$ with operators for qualified definiteness by 
adapting the definitions from \cite{iota1}, \cite{iota2}. 
\begin{definition}
We write $\varphi(x)$, suppressing the arity of $\varphi$,   
for atomic formulae $\varphi^{n}o_{1} ... o_{n}$ 
containing (possibly multiple occurrences of) $x$. 
Let $\mathcal{Q} \subseteq \mathcal{P}$. 

\begin{enumerate}
\item \textit{Positive qualified definiteness}: 

	$\psi(\iota_{\mathcal{Q}} x\varphi(x)) =_{def}$ 
	$\exists x \varphi(x)$  
	$\&$  
	$\underbrace{\forall u \forall v((\varphi(u)$ $\&$ $\varphi(v)) \supset u \overset{+}{=}_{\mathcal{Q}} 
	v)}_{\textnormal{\textit{Positive qualified uniqueness}}}$  
	$\&$ 
	$\forall w(\varphi(w) \supset \psi(w))$
	
	(\textquoteleft the $\mathcal{Q}$-unique $x$ which is $\varphi$ is $\psi$\textquoteright; 
	simpler: 
	\textquoteleft the $\mathcal{Q}$-unique $\varphi$ is $\psi$\textquoteright)

\item \textit{Negative qualified definiteness}: 

	$\psi(\iota_{\mathcal{Q}} x-\varphi(x)) =_{def}$ 
	$\exists x -\varphi(x)$  
	$\&$  
	$\underbrace{\forall u \forall v((-\varphi(u)$ $\&$ $-\varphi(v)) \supset u \overset{-}{=}_{\mathcal{Q}} 
	v)}_{\textnormal{\textit{Negative qualified uniqueness}}}$  
	$\&$ 
	$\forall w(-\varphi(w) \supset \psi(w))$
	
	(\textquoteleft the $\mathcal{Q}$-unique $x$ which fails to be $\varphi$ is $\psi$\textquoteright; 
	simpler: 
	\textquoteleft the $\mathcal{Q}$-unique $-\varphi$ is $\psi$\textquoteright)
	\end{enumerate}
\end{definition}

\begin{remark}
The definition of positive qualified definiteness differs from the definition of definiteness 
proposed in \cite{iota1}, \cite{iota2}, in that it does not make use of the familiar primitive notion 
of identity in the uniqueness part. 
In this respect, it significantly departs also from the tradition. 
\end{remark}

Qualified definiteness allows for degrees. 
\begin{definition}
Let $\mathcal{Q}^{\prime} \subset \mathcal{P}$.  
It has  
(i) the highest degree of definiteness 
in case $\mathcal{Q} = \mathcal{P}$ and  
(ii) a lower degree, 
in case $\mathcal{Q} = \mathcal{Q}^{\prime}$.  
Given $\mathcal{Q}^{\prime} \subset \mathcal{P}$, 
we can make the following distinction: 
\begin{enumerate}
\item \textit{Maximal definiteness}:
	\begin{enumerate}
	\item $\psi(\iota_{\mathcal{P}} x\varphi(x))$: 
		\textquoteleft the only $x$ which is $\varphi$ is $\psi$\textquoteright; 
	\item $\psi(\iota_{\mathcal{P}} x-\varphi(x))$: 
		\textquoteleft the only $x$ which fails to be $\varphi$ is $\psi$\textquoteright. 
	\end{enumerate}

\item \textit{Restricted definiteness}: 
	\begin{enumerate}
	\item $\psi(\iota_{\mathcal{Q}^{\prime}} x\varphi(x))$: 
		\textquoteleft the $x$ which is $\varphi$ is $\psi$\textquoteright;  
	\item $\psi(\iota_{\mathcal{Q}^{\prime}} x-\varphi(x))$: 
		\textquoteleft the $x$ which fails to be $\varphi$ is $\psi$\textquoteright. 
	\end{enumerate}
\end{enumerate}
\end{definition}
A loosely used definite description \textquoteleft the $F$\textquoteright\space 
is, thus, construed as a restriction of 
a strictly used \textquoteleft the $F$\textquoteright\space 
(i.e., the maximally definite description \textquoteleft the only $F$\textquoteright). 

\begin{definition}
Negative predications with qualified definite descriptions take 
the following forms:
\begin{enumerate}
\item $-\psi(\iota_{\mathcal{Q}} x\varphi(x))$: 
\textquoteleft the $\mathcal{Q}$-unique $x$ which is $\varphi$ fails to be $\psi$\textquoteright; 

\item $-\psi(\iota_{\mathcal{Q}} x-\varphi(x))$: 
\textquoteleft the $\mathcal{Q}$-unique $x$ which fails to be $\varphi$ fails to be $\psi$\textquoteright. 
\end{enumerate} 
\end{definition}

\section{Proof systems}

In order to obtain a proof system for reasoning with qualified definiteness, 
we enrich the intuitionistic bipredicational $\textbf{I0}(\mathcal{S}^{=}_{b})$-systems 
defined in \cite{negpredd} with rules for qualified definiteness,   
by adapting the rules for definiteness presented in \cite{iota1}, \cite{iota2}.   
We call the resulting systems $\textbf{I0}(\mathcal{S}^{=}_{b})\iota$-systems.

\subsection{Bipredicational natural deduction}

We first repeat the parts of the definition of $\textbf{I0}(\mathcal{S}^{=}_{b})$-systems 
from \cite{negpredd} which are relevant for present purposes. 

\subsubsection{Bipredicational subatomic systems}

\begin{definition}
A \textit{bipredicational subatomic system} 
$\mathcal{S}_{b}$ is a pair 
$\langle\mathcal{I}, \mathcal{R}_{b}\rangle$, where 
$\mathcal{I}$ is a \textit{subatomic base} and  
$\mathcal{R}_{b}$ is a set of \textit{introduction and elimination rules 
for atomic sentences and negative predications}. 
$\mathcal{I}$ is a 3-tuple 
$\langle\mathcal{C}, \mathcal{P}, v\rangle$, where 
$v$ is such that: 
\begin{enumerate}
\item For any $\alpha\in\mathcal{C}$,  
$v: \mathcal{C}\rightarrow \wp(Atm)$, 
where $v(\alpha)\subseteq Atm(\alpha)$. 

\item For any $\varphi^{n}\in\mathcal{P}$, 
$v: \mathcal{P}\rightarrow \wp(Atm)$,  
where 
$v(\varphi^{n})\subseteq Atm(\varphi^{n})$. 
\end{enumerate}
We let $\tau\Gamma =_{def} v(\tau)$ for any $\tau \in \mathcal{C}\cup\mathcal{P}$, 
and call $\tau\Gamma$ the set of \textit{term assumptions} for $\tau$. 
	$\mathcal{R}_{b}$ contains I/E-rules of the following form:
\begin{center}
{\small{
\AXC{$\mathcal{D}_{0}$}
\noLine
\UIC{$\varphi^{n}_{0}\Gamma$}
\AXC{$\mathcal{D}_{1}$}
\noLine
\UIC{$\qquad\alpha_{1}\Gamma$  $\quad ...$} 
\AXC{$\mathcal{D}_{n}$}
\noLine
\UIC{$\alpha_{n}\Gamma$}
\RightLabel{($as$I)}
\TIC{$\varphi^{n}_{0}\alpha_{1} ... \alpha_{n}$}
\DP\qquad
\AXC{$\mathcal{D}_{1}$}
\noLine
\UIC{$\varphi^{n}_{0}\alpha_{1} ... \alpha_{n}$}
\RightLabel{($as$E$_{i}$)}
\UIC{$\tau_{i}\Gamma$}
\DP

\bigskip

\AXC{$\mathcal{D}_{0}$}
\noLine
\UIC{$\varphi^{n}_{0}\Gamma$}
\AXC{$\mathcal{D}_{1}$}
\noLine
\UIC{$\qquad\alpha_{1}\Gamma$  $\quad ...$} 
\AXC{$\mathcal{D}_{n}$}
\noLine
\UIC{$\alpha_{n}\Gamma$}
\RightLabel{($-as$I)}
\TIC{$-\varphi^{n}_{0}\alpha_{1} ... \alpha_{n}$}
\DP\qquad
\AXC{$\mathcal{D}_{1}$}
\noLine
\UIC{$-\varphi^{n}_{0}\alpha_{1} ... \alpha_{n}$}
\RightLabel{($-as$E$_{i}$)}
\UIC{$\tau_{i}\Gamma$}
\DP
}}
\end{center}
Side conditions:
\begin{enumerate}
\item $as$I: 
	$\varphi^{n}_{0}\alpha_{1} ... \alpha_{n} \in 
	\varphi^{n}_{0}\Gamma \cap \alpha_{1}\Gamma \cap ... \cap \alpha_{n}\Gamma$.  

\item $-as$I:
	$\varphi^{n}_{0}\alpha_{1} ... \alpha_{n} \not\in 
	\varphi^{n}_{0}\Gamma \cap \alpha_{1}\Gamma \cap ... \cap \alpha_{n}\Gamma$.  

\item $as$E$_{i}$ and $-as$E$_{i}$:
	$i \in \{0, ..., n\}$ and $\tau_{i} \in \{\varphi^{n}_{0}, \alpha_{1}, ..., \alpha_{n}\}$.
\end{enumerate}
Terminology: We say that 
$-\varphi^{n}_{0}\alpha_{1} ... \alpha_{n}$ is 
\textit{negatively contained} in 
$\varphi^{n}_{0}\Gamma \cap \alpha_{1}\Gamma \cap ... \cap \alpha_{n}\Gamma$, 
in case the side condition on $-as$I is satisfied. 
\end{definition}

\begin{definition}
\textit{Derivations in $\mathcal{S}_{b}$-systems}. 

\textit{Basic step}. 
Any term assumption $\tau\Gamma$, any atomic sentence (resp. negative predication),  
i.e., a derivation from the open assumption of $\varphi^{n}_{0}\alpha_{1} ... \alpha_{n}$ 
(resp. $-\varphi^{n}_{0}\alpha_{1} ... \alpha_{n}$) is an 
$\mathcal{S}_{b}$-derivation.

\textit{Induction step}. 
If $\mathcal{D}_{i}$, for $i \in \{0, ..., n\}$, are 
$\mathcal{S}_{b}$-derivations, then an 
$\mathcal{S}_{b}$-derivation can be constructed by means of the I/E-rules for $as$ 
and $-as$ displayed above.  
\end{definition}

\begin{remark}
The term assumptions are, so to speak, proof-theoretic semantic values of the non-logical constants. 
Applications of the subatomic introduction rules $as$I and $-as$I serve to establish, 
on the basis of these values, the truth 
of atomic sentences and negative predications, respectively. 
\end{remark}

\subsubsection{Bipredicational subatomic identity systems}

\begin{definition}
Atomic sentences $\varphi(\alpha_{1})$ and $\varphi(\alpha_{2})$ are  
\textit{mirror atomic sentences} if and only if they are exactly alike except that the 
former contains occurrences of $\alpha_{1}$ at all the places at which the latter 
contains occurrences of $\alpha_{2}$, and vice versa. 
\end{definition}

\begin{definition}
A \textit{bipredicational subatomic identity system} 
$\mathcal{S}^{=}_{b}$ is a 3-tuple 
$\langle\mathcal{I}, \mathcal{R}_{b}, \mathcal{R}^{=}_{b}\rangle$, 
which extends a bipredicational subatomic system with 
a set $\mathcal{R}^{=}_{b}$ of \textit{I/E-rules for (positive/negative)   
qualified identity sentences}, 
where $\mathcal{Q}\subseteq\mathcal{P}$. 
\begin{enumerate}
\item $\overset{+}{=}_{\mathcal{Q}}$:  
\begin{center}
{\small{
\alwaysNoLine
\AXC{$[\varphi_{1}(\alpha_{1})]^{(1_{1})}$\quad  
$[\varphi_{1}(\alpha_{2})]^{(1_{2})}$}
\def\extraVskip{2pt}
\UIC{$\mathcal{D}_{1_{1}}$ \quad\quad\quad $\mathcal{D}_{1_{2}}$}
\noLine
\UIC{$\varphi_{1}(\alpha_{2})$ \quad\quad $\varphi_{1}(\alpha_{1})$}
\alwaysNoLine
\AXC{$ $}
\def\extraVskip{2pt}
\UIC{$ $}
\UIC{$...$}
\alwaysNoLine
\AXC{$[\varphi_{k}(\alpha_{1})]^{(k_{1})}$ \quad 
$[\varphi_{k}(\alpha_{2})]^{(k_{2})}$}
\def\extraVskip{2pt}
\UIC{$\mathcal{D}_{k_{1}}$ \quad\quad\quad $\mathcal{D}_{k_{2}}$}
\noLine
\UIC{$\varphi_{k}(\alpha_{2})$ \quad\quad $\varphi_{k}(\alpha_{1})$}
\alwaysSingleLine
{\tiny{\RightLabel{($\overset{+}{=}_{\mathcal{Q}}$I), $1_{1}, ..., k_{2}$}}}
\TIC{$\alpha_{1}\overset{+}{=}_{\mathcal{Q}}\alpha_{2}$}
\DP

\medskip

\AXC{$\mathcal{D}_{1}$}
\noLine
\UIC{$\alpha_{1} \overset{+}{=}_{\mathcal{Q}}\alpha_{2}$}
\AXC{$\mathcal{D}_{i_{1}}$}
\noLine
\UIC{$\varphi_{i}(\alpha_{1})$}
\RightLabel{($\overset{+}{=}_{\mathcal{Q}}$E$_{i}$1)}
\BIC{$\varphi_{i}(\alpha_{2})$}
\DP\qquad\AXC{$\mathcal{D}_{1}$}
\noLine
\UIC{$\alpha_{1} \overset{+}{=}_{\mathcal{Q}}\alpha_{2}$}
\AXC{$\mathcal{D}_{i_{2}}$}
\noLine
\UIC{$\varphi_{i}(\alpha_{2})$}
\RightLabel{($\overset{+}{=}_{\mathcal{Q}}$E$_{i}$2)} 
\BIC{$\varphi_{i}(\alpha_{1})$}
\DP
}}
\end{center}
where $\varphi_{i} \in \mathcal{Q}$, 
$i \in \{1, ..., k\}$, 
and $\varphi_{i}(\alpha_{1})$ and $\varphi_{i}(\alpha_{2})$ are 
mirror atomic sentences.

\item $\overset{-}{=}_{\mathcal{Q}}$: 
\begin{center}
{\small{
\alwaysNoLine
\AXC{$[-\varphi_{1}(\alpha_{1})]^{(1_{1})}$\quad  
$[-\varphi_{1}(\alpha_{2})]^{(1_{2})}$}
\def\extraVskip{2pt}
\UIC{$\mathcal{D}_{1_{1}}$ \quad\quad\quad $\mathcal{D}_{1_{2}}$}
\noLine
\UIC{$-\varphi_{1}(\alpha_{2})$ \quad\quad $-\varphi_{1}(\alpha_{1})$}
\alwaysNoLine
\AXC{$ $}
\def\extraVskip{2pt}
\UIC{$ $}
\UIC{$...$}
\alwaysNoLine
\AXC{$[-\varphi_{k}(\alpha_{1})]^{(k_{1})}$ \quad 
$[-\varphi_{k}(\alpha_{2})]^{(k_{2})}$}
\def\extraVskip{2pt}
\UIC{$\mathcal{D}_{k_{1}}$ \quad\quad\quad $\mathcal{D}_{k_{2}}$}
\noLine
\UIC{$-\varphi_{k}(\alpha_{2})$ \quad\quad $-\varphi_{k}(\alpha_{1})$}
\alwaysSingleLine
{\tiny{\RightLabel{($\overset{-}{=}_{\mathcal{Q}}$I), $1_{1}, ..., k_{2}$}}}
\TIC{$\alpha_{1}\overset{-}{=}_{\mathcal{Q}}\alpha_{2}$}
\DP

\medskip

\AXC{$\mathcal{D}_{1}$}
\noLine
\UIC{$\alpha_{1} \overset{-}{=}_{\mathcal{Q}}\alpha_{2}$}
\AXC{$\mathcal{D}_{i_{1}}$}
\noLine
\UIC{$-\varphi_{i}(\alpha_{1})$}
\RightLabel{($\overset{-}{=}_{\mathcal{Q}}$E$_{i}$1)}
\BIC{$-\varphi_{i}(\alpha_{2})$}
\DP 
\qquad\AXC{$\mathcal{D}_{1}$}
\noLine
\UIC{$\alpha_{1} \overset{-}{=}_{\mathcal{Q}}\alpha_{2}$}
\AXC{$\mathcal{D}_{i_{2}}$}
\noLine
\UIC{$-\varphi_{i}(\alpha_{2})$}
\RightLabel{($\overset{-}{=}_{\mathcal{Q}}$E$_{i}$2)} 
\BIC{$-\varphi_{i}(\alpha_{1})$}
\DP
}}
\end{center}
where $\varphi_{i} \in \mathcal{Q}$, 
$i \in \{1, ..., k\}$, 
and $\varphi_{i}(\alpha_{1})$ and $\varphi_{i}(\alpha_{2})$ are 
mirror atomic sentences.
\end{enumerate}
\end{definition}

\begin{remark}
In contrast to the standard I-rules for identity, the I-rules for qualified identity 
allow one to introduce formulae in which the identity predicate is not necessarily 
flanked by two occurrences of the same constant. 
Note that these rules reflect the definitions of the qualified identity predicates. 
\end{remark}

\begin{definition}
It will sometimes be convenient to use the notation $\{\mathcal{D}\}$ 
for the set of the subderivations 
$\mathcal{D}_{2_{1}}, \mathcal{D}_{2_{2}}, ..., \mathcal{D}_{k_{1}}, \mathcal{D}_{k_{2}}$ 
in applications of I-rules for qualified identity. 
\end{definition}

\subsubsection{Bipredicational subatomic natural deduction systems}

\begin{definition}
\textit{Derivations in $\textbf{I0}(\mathcal{S}^{=}_{b})$-systems}. 

\textit{Basic step}. 
Any derivation in an $\mathcal{S}^{=}_{b}$-system and any formula $A$ 
(i.e., a derivation from the open assumption of $A$) is a derivation in an  
$\textbf{I0}(\mathcal{S}^{=}_{b})$-system.

\textit{Induction step}. 
If $\mathcal{D}_{1}$, $\mathcal{D}_{2}$, and $\mathcal{D}_{3}$ are 
derivations in an $\textbf{I0}(\mathcal{S}^{=}_{b})$-system, 
and $C$ possibly a term assumption, then a derivation in an 
$\textbf{I0}(\mathcal{S}^{=}_{b})$-system can be constructed by means of the rules: 
\begin{center}
{\small{ 
\AXC{$\mathcal{D}_{1}$}
\noLine
\UIC{$A$}
\AXC{$\mathcal{D}_{2}$}
\noLine
\UIC{$B$}
\RightLabel{($\&$I)}
\BIC{$A \& B$}
\DP\quad\AXC{$\mathcal{D}_{1}$}
\noLine
\UIC{$A \& B$}
\RightLabel{($\&$E1)}
\UIC{$A$}
\DP\quad\AXC{$\mathcal{D}_{1}$}
\noLine
\UIC{$A \& B$}
\RightLabel{($\&$E2)}
\UIC{$B$}
\DP\quad\AXC{$\mathcal{D}_{1}$}
\noLine
\UIC{$A$}
\RightLabel{($\vee$I1)}
\UIC{$A \vee B$}
\DP\quad\AXC{$\mathcal{D}_{1}$}
\noLine
\UIC{$B$}
\RightLabel{($\vee$I2)}
\UIC{$A \vee B$}
\DP
}}
\end{center}
\begin{center}
{\small{ 
\AXC{$\mathcal{D}_{1}$}
\noLine
\UIC{$A \vee B$}
\AXC{$[A]^{(u)}$}
\noLine
\UIC{$\mathcal{D}_{2}$}
\noLine
\UIC{$C$}
\AXC{$[B]^{(v)}$}
\noLine
\UIC{$\mathcal{D}_{3}$}
\noLine
\UIC{$C$}
\RightLabel{($\vee$E), $u, v$}
\TIC{$C$}
\DP\quad\AXC{\quad $[A]^{(u)}$}
\noLine
\UIC{$\mathcal{D}_{1}$}
\noLine
\UIC{$B$}
\RightLabel{($\supset$I), $u$} 
\UIC{$A \supset B$}
\DP\quad\AXC{$\mathcal{D}_{1}$}
\noLine
\UIC{$A \supset B$}
\AXC{$\mathcal{D}_{2}$}
\noLine
\UIC{$A$}
\RightLabel{($\supset$E)}
\BIC{$B$}
\DP
}}
\end{center}
\begin{center}
{\small{ 
\AXC{$\mathcal{D}_{1}$}
\noLine
\UIC{$A(x/o)$}
\RightLabel{($\forall$I)}
\UIC{$\forall x A$}
\DP\quad\AXC{$\mathcal{D}_{1}$}
\noLine
\UIC{$\forall x A$}
\RightLabel{($\forall$E)}
\UIC{$A(x/o)$}
\DP\quad\AXC{$\mathcal{D}_{1}$}
\noLine
\UIC{$A(x/o)$}
\RightLabel{($\exists$I)}
\UIC{$\exists x A$}
\DP\quad\AXC{$\mathcal{D}_{1}$}
\noLine
\UIC{$\exists x A$} \AXC{$ $}
\noLine
\UIC{$[A(x/o)]^{(u)}$}
\noLine
\UIC{$\mathcal{D}_{2}$}
\noLine
\UIC{$C$}
\RightLabel{($\exists$E), $u$}
\BIC{$C$}
\DP
}}
\end{center}
\begin{center}
{\small{ 
\quad\AXC{$\mathcal{D}_{1}$} 
\noLine
\UIC{$\bot$}
\RightLabel{($\bot$i)} 
\UIC{$A$}
\DP 
}}
\end{center}
Side conditions: 
\begin{enumerate}
\item In $\forall$I: 
(i) if $o$ is a proper variable $y$, then $o \equiv x$ or $o$ is not free in $A$, 
and $o$ is not free in any assumption of a formula which is open in the derivation 
of $A(x/o)$; 
(ii) if $o$ is a nominal constant, then $o$ does neither occur in an 
undischarged assumption of a formula, nor in $\forall x A$, nor in a 
term assumption leaf $o\Gamma$;
(iii) $o$ is  nominal constant and \AXC{$\mathcal{D}_{1}$}
\noLine
\UIC{$A(x/o)$}
\DP for all $o \in \mathcal{C}$. 

\item In $\forall$E: $o$ is free for $x$ in $A$. 

\item In $\exists$E: 
(i) if $o$ is a proper variable $y$, then $o \equiv x$ or $o$ is not free in $A$, 
and $o$ is not free in $C$ nor in any assumption of a formula which is open in 
the derivation of the upper occurrence of $C$ other than $[A(x/o)]^{(u)}$; 
(ii) if $o$ is a nominal constant, then $o$ does neither occur in an undischarged assumption 
of a formula, nor in $\exists x A$, nor in $C$, nor in a term assumption leaf $o\Gamma$. 

\item In $\exists$I: $o$ is free for $x$ in $A$.
\end{enumerate}
Minimal bipredicational subatomic natural deduction systems, 
$\textbf{M0}(\mathcal{S}^{=}_{b})$-systems, 
result from 
$\textbf{I0}(\mathcal{S}^{=}_{b})$-systems,  
in case $\bot$i is removed. 

\smallskip
In case we employ the $\forall$I-rule according to the provisos 
for it given in (i) [(ii), (iii)], we use the labels $\forall$I.i [$\forall$I.ii, $\forall$I.iii]. 
Similarly, for the $\exists$E-rule and the labels $\exists$E.i and $\exists$E.ii. 
\end{definition}

\subsection{Bipredicational natural deduction for qualified definiteness}

We now add rules for the 
introduction and elimination of qualified definiteness 
to $\textbf{I0}(\mathcal{S}^{=}_{b})$-systems 
in order to obtain $\textbf{I0}(\mathcal{S}^{=}_{b})\iota$-systems 
which are sufficient to define a proof-theoretic semantics for the 
simplest possible constructions involving definite descriptions. 

\begin{definition}
Let $\mathcal{Q} \subseteq \mathcal{P}$. 
In the $\iota_{\mathcal{Q}}$I-rule below, the conclusion of $\mathcal{D}_{1}$ 
[$\mathcal{D}_{2}$, $\mathcal{D}_{3}$] corresponds to 
the (E)- [(QU)-, (P)-] clause. 
Likewise for $\iota_{\mathcal{Q}}-$I.
\begin{enumerate}
\item \textit{Rules for positive qualified definiteness}: 
\begin{center}
{\small{
\AXC{$\mathcal{D}_{1}$}
\noLine
\UIC{$\exists x \varphi(x)$}
\AXC{$\mathcal{D}_{2}$}
\noLine
\UIC{$\forall u \forall v((\varphi(u)$ $\&$ $\varphi(v)) \supset u \overset{+}{=}_{\mathcal{Q}} v)$}
\AXC{$\mathcal{D}_{3}$}
\noLine
\UIC{$\forall w(\varphi(w) \supset \psi(w))$}
\RightLabel{($\iota_{\mathcal{Q}}$I)}
\TIC{$\psi(\iota_{\mathcal{Q}} x\varphi(x))$}
\DP
}}
\end{center}

\begin{center}
{\small{
\AXC{$\mathcal{D}_{1}$}
\noLine
\UIC{$\psi(\iota_{\mathcal{Q}} x\varphi(x))$}
\RightLabel{($\iota_{\mathcal{Q}}$E1)}
\UIC{$\exists x \varphi(x)$}
\DP\quad\AXC{$\mathcal{D}_{1}$}
\noLine
\UIC{$\psi(\iota_{\mathcal{Q}} x\varphi(x))$}
\RightLabel{($\iota_{\mathcal{Q}}$E2)}
\UIC{$\forall u \forall v((\varphi(u)$ $\&$ $\varphi(v)) \supset u \overset{+}{=}_{\mathcal{Q}} v)$}
\DP\quad\AXC{$\mathcal{D}_{1}$}
\noLine
\UIC{$\psi(\iota_{\mathcal{Q}} x\varphi(x))$}
\RightLabel{($\iota_{\mathcal{Q}}$E3)}
\UIC{$\forall w(\varphi(w) \supset \psi(w))$}
\DP
}}
\end{center}
The $\iota_{\mathcal{Q}}$I/E-rules for $-\psi(\iota_{\mathcal{Q}} x\varphi(x))$ are analogous. 

\item \textit{Rules for negative qualified definiteness}: 
\begin{center}
{\small{
\AXC{$\mathcal{D}_{1}$}
\noLine
\UIC{$\exists x -\varphi(x)$}
\AXC{$\mathcal{D}_{2}$}
\noLine
\UIC{$\forall u \forall v((-\varphi(u)$ $\&$ $-\varphi(v)) \supset u \overset{-}{=}_{\mathcal{Q}} v)$}
\AXC{$\mathcal{D}_{3}$}
\noLine
\UIC{$\forall w(-\varphi(w) \supset \psi(w))$}
\RightLabel{($\iota_{\mathcal{Q}}-$I)}
\TIC{$\psi(\iota_{\mathcal{Q}} x-\varphi(x))$}
\DP
}}
\end{center}

\begin{center}
{\small{
\AXC{$\mathcal{D}_{1}$}
\noLine
\UIC{$\psi(\iota_{\mathcal{Q}} x-\varphi(x))$}
\RightLabel{($\iota_{\mathcal{Q}}-$E1)}
\UIC{$\exists x -\varphi(x)$}
\DP\quad\AXC{$\mathcal{D}_{1}$}
\noLine
\UIC{$\psi(\iota_{\mathcal{Q}} x-\varphi(x))$}
\RightLabel{($\iota_{\mathcal{Q}}-$E2)}
\UIC{$\forall u \forall v((-\varphi(u)$ $\&$ $-\varphi(v)) \supset u \overset{-}{=}_{\mathcal{Q}} v)$}
\DP\quad\AXC{$\mathcal{D}_{1}$}
\noLine
\UIC{$\psi(\iota_{\mathcal{Q}} x-\varphi(x))$}
\RightLabel{($\iota_{\mathcal{Q}}-$E3)}
\UIC{$\forall w(-\varphi(w) \supset \psi(w))$}
\DP
}}
\end{center}
The $\iota_{\mathcal{Q}}-$I/E-rules for $-\psi(\iota_{\mathcal{Q}} x-\varphi(x))$ are analogous. 
\end{enumerate}
\end{definition}

\begin{example}
Let $\mathcal{Q} = \{\varphi_{1}, ..., \varphi_{k}\}$, 
$\mathcal{Q} \subseteq \mathcal{P}$, and 
$\varphi_{i}, \varphi_{j} \in \mathcal{Q}$, 
where $i, j \in \{1, ..., k\}$ and $i \not= j$.
\begin{equation}
{\small{
\AXC{$\varphi_{i}\Gamma$}
\AXC{$...$}
\AXC{$\alpha\Gamma$}
\TIC{$\varphi_{i}(\alpha)$}
\LeftLabel{$\mathcal{D}_{1} =$}
\UIC{$\exists x \varphi_{i}(x)$}
\DP
}}
\end{equation}

\begin{equation}
{\small{
\AXC{$[\varphi_{1}(\alpha)]^{(1_{1})}$}
\UIC{$\varphi_{1}\Gamma$}
\AXC{$...$}
\AXC{$[\varphi_{i}(\alpha) \& \varphi_{i}(\beta)]^{(1)}$}
\UIC{$\varphi_{i}(\beta)$}
\UIC{$\beta\Gamma$}
\TIC{$\varphi_{1}(\beta)$}
\AXC{$[\varphi_{1}(\beta)]^{(1_{2})}$}
\UIC{$\varphi_{1}\Gamma$}
\AXC{$...$}
\AXC{$[\varphi_{i}(\alpha) \& \varphi_{i}(\beta)]^{(1)}$}
\UIC{$\varphi_{i}(\alpha)$}
\UIC{$\alpha\Gamma$}
\TIC{$\varphi_{1}(\alpha)$}
\AXC{$\{\mathcal{D}\}$}
\RightLabel{$1_{1}, ..., k_{2}$}
\TIC{$\alpha \overset{+}{=}_{\mathcal{Q}} \beta$}
\RightLabel{$1$}
\UIC{$(\varphi_{i}(\alpha) \& \varphi_{i}(\beta)) \supset \alpha \overset{+}{=}_{\mathcal{Q}} \beta$}
\RightLabel{iii}
\UIC{$\forall v((\varphi_{i}(\alpha) \& \varphi_{i}(v)) \supset \alpha \overset{+}{=}_{\mathcal{Q}} v)$}
\LeftLabel{$\mathcal{D}_{2} =$}
\RightLabel{iii}
\UIC{$\forall u\forall v((\varphi_{i}(u) \& \varphi_{i}(v)) \supset u \overset{+}{=}_{\mathcal{Q}} v)$}
\DP
}}
\end{equation}

\begin{equation}
{\small{
\AXC{$\varphi_{j}\Gamma$}
\AXC{$...$}
\AXC{$[\varphi_{i}(\alpha)]^{(2)}$}
\UIC{$\alpha\Gamma$}
\TIC{$\varphi_{j}(\alpha)$}
\RightLabel{2}
\UIC{$\varphi_{i}(\alpha) \supset \varphi_{j}(\alpha)$}
\LeftLabel{$\mathcal{D}_{3} =$}
\RightLabel{iii}
\UIC{$\forall w(\varphi_{i}(w) \supset \varphi_{j}(w))$}
\DP
}}
\end{equation}

\begin{equation}
{\small{
\AXC{$\mathcal{D}_{1}$}
\noLine
\UIC{$\exists x \varphi_{i}(x)$}
\AXC{$\mathcal{D}_{2}$}
\noLine
\UIC{$\forall u\forall v((\varphi_{i}(u) \& \varphi_{i}(v)) \supset u \overset{+}{=}_{\mathcal{Q}} v)$}
\AXC{$\mathcal{D}_{3}$}
\noLine
\UIC{$\forall w(\varphi_{i}(w) \supset \varphi_{j}(w))$}
\RightLabel{($\iota_{\mathcal{Q}}$I)}
\TIC{$\varphi_{j}(\iota_{\mathcal{Q}} x\varphi_{i}(x))$}
\DP
}}
\end{equation}
\end{example}

\subsection{Normalization and the subformula property}

Normalization and the subformula property 
for $\textbf{I0}(\mathcal{S}^{=}_{b})$-systems have been established in \cite{negpredd} 
making use of the methods developed in \cite{Prawitz}; see also \cite{TS}. 
These results guarantee, e.g., the consistency of the systems and simplify proof search in them. 

In order to prove normalization for $\textbf{I0}(\mathcal{S}^{=}_{b})\iota$-systems,  
we make use of the following conversions. 

\begin{definition}
The \textit{conversions (detour, permutation, simplification)}  
for $\textbf{I0}(\mathcal{S}^{=}_{b})\iota$-systems 
comprise those for $\textbf{I0}(\mathcal{S}^{=}_{b})$-systems (see \cite{negpredd}) 
and the following detour conversions:   
\begin{enumerate}
\item \textit{$\iota_{\mathcal{Q}}$-Conversions}:
\begin{center}
{\small{
\AXC{$\mathcal{D}_{1}$}
\noLine
\UIC{$\exists x \varphi(x)$}
\AXC{$\mathcal{D}_{2}$}
\noLine
\UIC{$\forall u \forall v((\varphi(u)$ $\&$ $\varphi(v)) \supset u \overset{+}{=}_{\mathcal{Q}} v)$}
\AXC{$\mathcal{D}_{3}$}
\noLine
\UIC{$\forall w(\varphi(w) \supset \psi(w))$}
\RightLabel{($\iota_{\mathcal{Q}}$I)}
\TIC{$\psi(\iota_{\mathcal{Q}} x\varphi(x))$}
\RightLabel{($\iota_{\mathcal{Q}}$E1)}
\UIC{$\exists x \varphi(x)$}
\DP\quad conv\quad\AXC{$\mathcal{D}_{1}$}
\noLine
\UIC{$\exists x \varphi(x)$}
\DP
}}
\end{center}
\begin{center}
{\small{
\AXC{$\mathcal{D}_{1}$}
\noLine
\UIC{$\exists x \varphi(x)$}
\AXC{$\mathcal{D}_{2}$}
\noLine
\UIC{$\forall u \forall v((\varphi(u)$ $\&$ $\varphi(v)) \supset u \overset{+}{=}_{\mathcal{Q}} v)$}
\AXC{$\mathcal{D}_{3}$}
\noLine
\UIC{$\forall w(\varphi(w) \supset \psi(w))$}
\RightLabel{($\iota_{\mathcal{Q}}$I)}
\TIC{$\psi(\iota_{\mathcal{Q}} x\varphi(x))$}
\RightLabel{($\iota_{\mathcal{Q}}$E2)}
\UIC{$\forall u \forall v((\varphi(u)$ $\&$ $\varphi(v)) \supset u \overset{+}{=}_{\mathcal{Q}} v)$}
\DP

\medskip
conv

\medskip
\AXC{$\mathcal{D}_{2}$}
\noLine
\UIC{$\forall u \forall v((\varphi(u)$ $\&$ $\varphi(v)) \supset u \overset{+}{=}_{\mathcal{Q}} v)$}
\DP
}}
\end{center}
\begin{center}
{\small{
\AXC{$\mathcal{D}_{1}$}
\noLine
\UIC{$\exists x \varphi(x)$}
\AXC{$\mathcal{D}_{2}$}
\noLine
\UIC{$\forall u \forall v((\varphi(u)$ $\&$ $\varphi(v)) \supset u \overset{+}{=}_{\mathcal{Q}} v)$}
\AXC{$\mathcal{D}_{3}$}
\noLine
\UIC{$\forall w(\varphi(w) \supset \psi(w))$}
\RightLabel{($\iota_{\mathcal{Q}}$I)}
\TIC{$\psi(\iota_{\mathcal{Q}} x\varphi(x))$}
\RightLabel{($\iota_{\mathcal{Q}}$E3)}
\UIC{$\forall w(\varphi(w) \supset \psi(w))$}
\DP

\medskip
conv

\medskip
\AXC{$\mathcal{D}_{3}$}
\noLine
\UIC{$\forall w(\varphi(w) \supset \psi(w))$}
\DP
}}
\end{center}

\item \textit{$\iota_{\mathcal{Q}}-$-Conversions}: 
	analogous. 
\end{enumerate}
\end{definition}

\begin{remark}
Unlike the $\iota$E2-rules in \cite{iota1}, \cite{iota2}, the above E2-rules have a single premiss 
and invert directly. 
\end{remark}

\begin{theorem}
\textit{Normalization}:  
Any derivation $\mathcal{D}$ in an $\textbf{I0}(\mathcal{S}^{=}_{b})\iota$-system can be transformed 
into a normal $\textbf{I0}(\mathcal{S}^{=}_{b})\iota$-derivation. 
\end{theorem}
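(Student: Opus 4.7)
The plan is to extend the normalization proof for $\textbf{I0}(\mathcal{S}^{=}_{b})$-systems from \cite{negpredd} by incorporating the $\iota_{\mathcal{Q}}$- and $\iota_{\mathcal{Q}}-$-detour conversions, using Prawitz's standard strategy. First I would extend the \emph{degree} function on formulae so that $\psi(\iota_{\mathcal{Q}} x\varphi(x))$ and $\psi(\iota_{\mathcal{Q}} x-\varphi(x))$ (and their negative predications) count as strictly more complex than each of the three premises of the corresponding $\iota_{\mathcal{Q}}$I- or $\iota_{\mathcal{Q}}-$I-rule. This requires inspecting the three premise shapes---$\exists x\varphi(x)$, the qualified-uniqueness formula, and $\forall w(\varphi(w)\supset\psi(w))$---and confirming that each has smaller degree than the $\iota$-formula, so that every $\iota_{\mathcal{Q}}$-conversion replaces a maximum formula by strictly less complex conclusions.

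Next I would take over the measure used in \cite{negpredd}, typically a lexicographic pair $\langle d, n\rangle$ where $d$ is the maximum degree of a maximal formula (or maximal segment) in $\mathcal{D}$ and $n$ is the sum of lengths of threads supporting maximal formulae of degree $d$. The induction hypothesis is that every derivation with smaller measure normalizes. As in Prawitz, one picks a topmost maximal formula of highest degree in $\mathcal{D}$. If it is not a $\iota$-formula, the existing detour, permutation, and simplification conversions from \cite{negpredd} apply, and by induction the reduced derivation normalizes. If it is $\psi(\iota_{\mathcal{Q}} x\varphi(x))$, $\psi(\iota_{\mathcal{Q}} x-\varphi(x))$, or one of their negative-predication variants, then it is the conclusion of a $\iota_{\mathcal{Q}}$I/$\iota_{\mathcal{Q}}-$I introduction immediately followed (possibly through maximal segments generated by $\vee$E- and $\exists$E-permutations) by an E1, E2, or E3 elimination. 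Applying the appropriate $\iota_{\mathcal{Q}}$- or $\iota_{\mathcal{Q}}-$-conversion projects out $\mathcal{D}_{1}$, $\mathcal{D}_{2}$, or $\mathcal{D}_{3}$, strictly lowering the measure.

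For maximal \emph{segments} rather than formulae, I would handle them by first permuting the $\iota_{\mathcal{Q}}$-elimination upwards past any $\vee$E or $\exists$E whose conclusion sits on the segment---this is routine because each $\iota_{\mathcal{Q}}$E-rule has a single premise, so the permutation conversions are the obvious analogues of those for $\&$E. Once the segment is reduced to length one, the detour conversion above applies. Because the E2-rule is, as noted in the remark preceding the theorem, single-premise and inverts directly (unlike the two-premise E2 of \cite{iota1,iota2}), no additional bookkeeping is needed for side premises.

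I expect the only genuine obstacle to be the degree calibration: one must verify uniformly that the qualified-uniqueness premise $\forall u\forall v((\varphi(u)\mathbin{\&}\varphi(v))\supset u\overset{+}{=}_{\mathcal{Q}}v)$ has smaller degree than the $\iota$-formula despite the fact that $\overset{+}{=}_{\mathcal{Q}}$ is a defined symbol unfolding into a possibly large conjunction of universally quantified biconditionals. Provided the degree function is defined on the \emph{definiens} (or the $\iota$-formula is assigned a degree strictly greater than its three premise-degrees by fiat in the extended language $\mathcal{L}\iota$), every $\iota_{\mathcal{Q}}$-conversion is measure-decreasing, and the induction goes through exactly as in \cite{negpredd}.
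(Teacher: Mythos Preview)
Your proposal is correct and follows the same approach as the paper: extend the normalization argument for $\textbf{I0}(\mathcal{S}^{=}_{b})$-systems from \cite{negpredd} (which in turn follows Prawitz) by adding the $\iota_{\mathcal{Q}}$- and $\iota_{\mathcal{Q}}-$-detour conversions to the stock of reductions. The paper's own proof is in fact just a two-sentence pointer to \cite{negpredd}; your write-up supplies the standard details (degree extension, lexicographic measure, permutation of the single-premise $\iota_{\mathcal{Q}}$E-rules past $\vee$E/$\exists$E) that the paper leaves implicit, and your remark about calibrating the degree via the definiens is the right way to resolve the only delicate point.
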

\begin{proof}
We repeat the corresponding proof for 
$\textbf{I0}(\mathcal{S}^{=}_{b})$-systems in \cite{negpredd},  
taking also the detour conversions for qualified definiteness into account.  
As a result, all detours can be eliminated from derivations in these systems. 
\end{proof}

Importantly, $\textbf{I0}(\mathcal{S}^{=}_{b})\iota$-systems enjoy the subformula property 
as a special case of the subexpression property. 
The latter property deals with units and expressions. 
Roughly, a unit is either a formula or a term assumption $\tau\Gamma$, and 
an expression is either a formula or the non-logical constant $\tau$ of $\tau\Gamma$. 
\begin{theorem}
\textit{Subexpression property}: 
If $\mathcal{D}$ is a normal derivation of a unit $U$ from a set of 
units $\Gamma$ in an $\textbf{I0}(\mathcal{S}^{=}_{b})\iota$-system, 
then each unit in $\mathcal{D}$ is a subexpression of an expression in $\Gamma\cup\{U\}$. 
\end{theorem}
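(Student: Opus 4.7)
The plan is to extend the track-based proof of the subexpression property for $\textbf{I0}(\mathcal{S}^{=}_{b})$-systems (as established in \cite{negpredd} following the method of \cite{Prawitz}) by adding cases for the new rules for qualified definiteness. First I fix a suitable notion of subexpression for $\mathcal{L}\iota$: for atomic sentences and negative predications the subexpressions include the non-logical constants occurring in them; for formulae built by the primitive connectives and quantifiers the subexpressions are defined in the standard way; and for the defined constructs $o_{1}\overset{+}{=}_{\mathcal{Q}}o_{2}$, $o_{1}\overset{-}{=}_{\mathcal{Q}}o_{2}$, $\psi(\iota_{\mathcal{Q}} x\varphi(x))$, $\psi(\iota_{\mathcal{Q}} x-\varphi(x))$, and their negative predication analogues, the subexpressions are taken to be those of the corresponding definitional expansion. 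A term assumption $\tau\Gamma$ has $\tau$ as its expression, with subexpressions inherited from $\tau$.

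Next I would adopt the notion of a track in a normal $\textbf{I0}(\mathcal{S}^{=}_{b})$-derivation from \cite{negpredd}, stipulating in addition that the major premise of any $\iota_{\mathcal{Q}}$E- or $\iota_{\mathcal{Q}}-$E-rule application and its conclusion lie on a common track. By the preceding normalization theorem we may assume $\mathcal{D}$ to be normal. A standard track argument then splits every track into an E-part (successive E-rule applications to major premises) and an I-part (successive I-rule applications), joined at a minimum unit. Along the E-part each successor is an immediate subexpression of its predecessor; along the I-part each predecessor is an immediate subexpression of its successor; the minimum of a main track is a subexpression of $U$; and the top of any track is either an open assumption in $\Gamma$, a term-assumption leaf, or an assumption discharged at a rule application that itself occurs on an earlier track. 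The claim then follows by induction on track-order.

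The only genuinely new cases concern the $\iota_{\mathcal{Q}}$- and $\iota_{\mathcal{Q}}-$-rules, and they are immediate from the chosen notion of subexpression: each $\iota_{\mathcal{Q}}$E$_{k}$-conclusion is literally one of the three conjuncts in the definitional expansion of its major premise, and dually the three premises of an $\iota_{\mathcal{Q}}$I-application are exactly those conjuncts for its conclusion; the negative variants are handled analogously. The main obstacle is the bookkeeping: one has to verify that tracks behave correctly when passing through the subatomic I/E-rules (which interleave formulae and term assumptions) and through the qualified identity I-rules (which simultaneously discharge many mirror-pair assumptions via $\{\mathcal{D}\}$), and to confirm that the subexpression relation transits cleanly across all definitional abbreviations for $\overset{+}{=}_{\mathcal{Q}}$, $\overset{-}{=}_{\mathcal{Q}}$, and the four $\iota_{\mathcal{Q}}$-constructs. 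Once these verifications are in hand, the induction goes through as in \cite{negpredd}.
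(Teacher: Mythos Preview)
Your proposal is correct and follows essentially the same approach as the paper: the paper's own proof is a one-line appeal to the corresponding argument for $\textbf{I0}(\mathcal{S}^{=}_{b})$-systems in \cite{negpredd}, and you have simply spelled out what that extension involves (track structure, the new $\iota_{\mathcal{Q}}$-cases, and the requisite subexpression clauses for the defined constructs). Your observation that the $\iota_{\mathcal{Q}}$E-rules are single-premise and invert directly (so the conclusion is a conjunct of the definitional expansion of the major premise) is exactly the point that makes the new cases routine, as the paper's Remark~3.3 anticipates.
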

\begin{proof}
We proceed like in the corresponding proof for $\textbf{I0}(\mathcal{S}^{=}_{b})$-systems 
in \cite{negpredd}.  
As a result, all expressions in $\mathcal{D}$ are subexpressions of either the 
root or the leaves of $\mathcal{D}$. 
\end{proof}

\begin{corollary}
\textit{Subformula property}: 
If $\mathcal{D}$ is a normal $\textbf{I0}(\mathcal{S}^{=}_{b})\iota$-derivation 
of formula $A$ from a set of formulae $\Gamma$, then 
each formula in $\mathcal{D}$ is a subformula of a formula in $\Gamma\cup\{A\}$. 
\end{corollary}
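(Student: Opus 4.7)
The plan is to obtain this corollary as an immediate specialization of the subexpression property (Theorem 3.2). First I would observe that, by hypothesis, every element of $\Gamma \cup \{A\}$ is a formula; hence each associated \emph{expression} in the sense of Theorem 3.2 is itself a formula, and not a bare non-logical constant $\tau$ coming from some term assumption $\tau\Gamma$. For any formula $B$ occurring in $\mathcal{D}$, the formula $B$ is in particular a unit of $\mathcal{D}$, so Theorem 3.2 supplies a formula $C \in \Gamma \cup \{A\}$ such that $B$ is a subexpression of $C$.

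The second step is to argue that, when restricted to formula-units and formula-expressions, the subexpression relation collapses to the ordinary subformula relation. This is essentially a matter of reading off the definition of subexpression given in \cite{negpredd}: the two ways in which subexpression properly generalises subformula — viz., a formula being a subexpression of a term assumption $\tau\Gamma$, and a non-logical constant $\tau$ being a subexpression of a formula containing it — are both irrelevant here, since $C$ is a formula and $B$ is a formula. The $\iota_{\mathcal{Q}}$- and $\overset{\pm}{=}_{\mathcal{Q}}$-operators do not disturb the coincidence, because a $\iota_{\mathcal{Q}}$-formula $\psi(\iota_{\mathcal{Q}} x\varphi(x))$ is contextually defined as the conjunction of its (E), (QU) and (P) clauses, and the subexpression relation is defined so as to track this definitional unfolding, so its subexpressions which are formulae are precisely the formulae we would intuitively call its subformulae.

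I expect the main obstacle to be purely notational: verifying that the inductive clauses defining subexpression in \cite{negpredd}, once extended with clauses for $\overset{+}{=}_{\mathcal{Q}}$, $\overset{-}{=}_{\mathcal{Q}}$, $\iota_{\mathcal{Q}}$, and $\iota_{\mathcal{Q}}-$, coincide on the nose with the standard inductive clauses for subformula whenever both arguments happen to be formulae. Once this routine check is in place, the corollary follows by a single appeal to Theorem 3.2, with no further induction on the structure of $\mathcal{D}$ required.
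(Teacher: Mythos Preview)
Your proposal is correct and follows exactly the route the paper takes: the corollary is stated without a separate proof, since the paper already notes that the subformula property is ``a special case of the subexpression property'' (Theorem 3.2). Your additional unpacking of why subexpression collapses to subformula when all units involved are formulae is more detailed than what the paper provides, but the approach is the same.
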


\begin{remark}
Since the identity predicates used in the proof systems \cite{iota1}, \cite{iota2}, are primitive, 
such a subformula result is not available for these systems. 
This remark also applies to other available intuitionistic natural deduction systems for definiteness 
(e.g., \cite{NKiota1}, \cite{Stenlund}). 
\end{remark}

\begin{corollary}
\textit{Internal completeness}. 
Internal completeness in the sense of \cite{GirardLudics} (pp. 139--140)
is given by Corollary 3.1. 
To establish internal completeness for $\textbf{I0}(\mathcal{S}^{=}_{b})\iota$-systems 
in the sense of \cite{negpredd} (p. 127), we proceed like described therein.  
\end{corollary}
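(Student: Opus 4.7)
The plan is to handle the two senses of internal completeness in turn, since Corollary 3.2 bundles them together.

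For the first sense (following \cite{GirardLudics}, pp.~139--140), no real work is needed beyond what Corollary 3.1 already supplies. Girard's notion of internal completeness for a proof system amounts to the claim that the derivable consequences of a set of formulae are fixed by the subformulae of those formulae, so that the system is ``complete with respect to itself'' without appeal to any external model. Since Corollary 3.1 asserts exactly the subformula property for normal $\textbf{I0}(\mathcal{S}^{=}_{b})\iota$-derivations, the Girardian notion is discharged immediately: any formula occurring in a normal derivation is already among the subformulae of the root or of the open assumptions.

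For the second sense (that of \cite{negpredd}, p.~127), I would follow the template laid out there, extending it to cover the new $\iota_{\mathcal{Q}}$- and $\iota_{\mathcal{Q}}-$-rules. The key steps, in order, are: (i) observe that the inductive argument in \cite{negpredd} is driven by the I/E-rule pairs together with their detour conversions, and that Definition 3.12 has supplied exactly the required $\iota_{\mathcal{Q}}$- and $\iota_{\mathcal{Q}}-$-conversions; (ii) invoke the normalization theorem (Theorem 3.1) to reduce arbitrary derivations to normal ones; (iii) invoke the subexpression property (Theorem 3.2), whose conclusion is strictly stronger than the subformula property and which is the form actually used in \cite{negpredd}; (iv) verify the harmony clauses for the new operators by inspecting that each $\iota_{\mathcal{Q}}$E$_{i}$- (and $\iota_{\mathcal{Q}}-$E$_{i}$-) rule inverts directly on the corresponding I-rule, which is straightforward given that the E2-rules now have a single premiss (cf.~Remark 3.3).

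The step I expect to require the most care is (iv) for the uniqueness component, because the $\iota_{\mathcal{Q}}$I-rule carries the qualified-identity clause $\forall u \forall v((\varphi(u)\,\&\,\varphi(v))\supset u \overset{+}{=}_{\mathcal{Q}} v)$, and a faithful local-completeness check must track how the subderivations $\{\mathcal{D}\}$ arising from applications of the $\overset{+}{=}_{\mathcal{Q}}$I-rule are preserved under the detour conversion. Once one confirms, as in \cite{negpredd}, that the subexpression property restricts the material occurring in such subderivations to subexpressions of $\varphi$, $\psi$, the constants in $\mathcal{Q}$, and the hypotheses, the argument closes in exactly the manner described in \cite{negpredd}. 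No new ideas beyond those already deployed there should be required.
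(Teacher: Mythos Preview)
Your proposal is correct and aligns with the paper's approach: the paper itself supplies no separate proof environment for this corollary, treating the statement as self-justifying by pointing to Corollary~3.1 for Girard's sense and to the procedure in \cite{negpredd} for the second sense. Your steps (i)--(iv) merely unpack what ``proceed like described therein'' would concretely involve, which is more detail than the paper offers but introduces no divergence in method or substance.
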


\section{A proof-theoretic semantics}

On the basis of the results obtained, we may 
formulate a subatomic proof-theoretic semantics for qualified definiteness. 
For this purpose, we adjust the corresponding definitions form \cite{negpredd} 
to the present systems. 

\begin{definition}
\begin{enumerate}
\item A derivation $\mathcal{D}$ of a formula $A$ in an $\textbf{I0}(\mathcal{S}^{=}_{b})\iota$-system is 
a \textit{canonical derivation} iff it derives $A$ by means of an application of an I-rule 
(in the last step of $\mathcal{D}$). 

\item A canonical derivation $\mathcal{D}$ of $A$ in an $\textbf{I0}(\mathcal{S}^{=}_{b})\iota$-system is 
a \textit{canonical proof} 
of $A$ in that system 
iff there are no applications of $as$-rules or $-as$-rules in $\mathcal{D}$ and 
all assumptions of $\mathcal{D}$ have been discharged. 

\item The conclusions of canonical $\textbf{I0}(\mathcal{S}^{=}_{b})\iota$-derivations 
are $\textbf{I0}(\mathcal{S}^{=}_{b})\iota$-\textit{theses} and the conclusions of 
$\textbf{I0}(\mathcal{S}^{=}_{b})\iota$-derivations which are also proofs 
are $\textbf{I0}(\mathcal{S}^{=}_{b})\iota$-\textit{theorems}. 
\end{enumerate}
\end{definition}

\begin{definition} 
\textit{Meaning}:  
Let $I$ be an $\textbf{I0}(\mathcal{S}^{=}_{b})\iota$-system.   
\begin{enumerate}
\item  
The meaning of a \textit{non-logical constant} $\tau$ is given by the 
term assumptions $\tau\Gamma$ for $\tau$ which are 
determined by the subatomic base of the $\mathcal{S}^{=}_{b}$-system of $I$.  
\item  
The meaning of a \textit{formula} $A$ of $\mathcal{L}\iota$ is given by the set of canonical derivations 
of $A$ in $I$. 
\end{enumerate}
\end{definition}

\begin{remark}
The rules for qualified identity defined in \cite{negpredd} allow 
not only for reductions in terms of conversions, but also for expansions 
(cf. \cite{naturalistic}: 256). 
This is a further point, in which 
they differ from the standard natural deduction rules for identity (cf. \cite{negpredd}: 104). 
For an overview of the structural proof theory of identity see \cite{AIpsh}. 
\end{remark}

\begin{remark} 
Note that this formal account of meaning does not make use of 
a semantic ontology (e.g., individuals, possible worlds),  
something essential to model-theoretic semantics. 
Specifically, the meaning of $\exists$-formulae does not presuppose a domain of individuals. 
Strictly speaking, $\exists xA$ reads: 
\textquoteleft For at least one $x$, $A$\textquoteright, where $x$ is a nominal variable 
ranging over $\mathcal{C}$. 
This feature of the present semantics makes it particularly natural for the analysis 
of constructions which involve non-denoting (or empty) terms 
(e.g., \textquoteleft Pegasus\textquoteright, \textquoteleft the captive unicorn\textquoteright).  
\end{remark}

\section{On incomplete descriptions}

Qualified uniqueness allows for fine-tuning. 
\begin{remark}
Let $\{\varphi_{i}\} \subset \mathcal{Q}^{\prime} \subset \mathcal{P}$ 
and $\varphi_{i} \in \mathcal{P}$, where $i\in \{1, ..., k\}$.   
We consider the following cases: 
(i) $\mathcal{Q} = \mathcal{P}$, 
(ii) $\mathcal{Q} = \mathcal{Q}^{\prime}$, and 
(iii) $\mathcal{Q} = \{\varphi_{i}\}$.  

Case (i): Like (2),  
but with $\mathcal{Q}$ replaced by $\mathcal{P}$. 
This case gives us the maximal degree of qualified uniqueness. 
For every $x$ and $y$, if they are $\varphi_{i}$, then they are identical 
with respect to every predicate  
(i.e., they are indiscernible in every respect). 

Case (ii):  
Like case (i), but with 
$\mathcal{P}$ replaced by $\mathcal{Q}^{\prime}$ and with 
$\{\mathcal{D}\}$ replaced by $\{\mathcal{D}\}^{\prime}$, 
where $\{\mathcal{D}\}^{\prime} \subset \{\mathcal{D}\}$. 
This case gives us an intermediate degree of qualified uniqueness.  
For every $x$ and $y$, if they are $\varphi_{i}$, then they are identical 
with respect to every predicate in $\mathcal{Q}^{\prime}$ 
(i.e., they are indiscernible with respect to $\mathcal{Q}^{\prime}$, 
but discernible with respect to $\mathcal{P}\setminus\mathcal{Q}^{\prime}$).

Case (iii): 
\begin{equation}
{\small{
\AXC{$[\varphi_{i}(\alpha)]^{(1_{1})}$}
\UIC{$\varphi_{i}\Gamma$}
\AXC{$...$}
\AXC{$[\varphi_{i}(\alpha) \& \varphi_{i}(\beta)]^{(1)}$}
\UIC{$\varphi_{i}(\beta)$}
\UIC{$\beta\Gamma$}
\TIC{$\varphi_{i}(\beta)$}
\AXC{$[\varphi_{i}(\beta)]^{(1_{2})}$}
\UIC{$\varphi_{i}\Gamma$}
\AXC{$...$}
\AXC{$[\varphi_{i}(\alpha) \& \varphi_{i}(\beta)]^{(1)}$}
\UIC{$\varphi_{i}(\alpha)$}
\UIC{$\alpha\Gamma$}
\TIC{$\varphi_{i}(\alpha)$}
\RightLabel{$1_{1}, 1_{2}$}
\BIC{$\alpha \overset{+}{=}_{\{\varphi_{i}\}} \beta$}
\RightLabel{$1$}
\UIC{$(\varphi_{i}(\alpha) \& \varphi_{i}(\beta)) \supset \alpha \overset{+}{=}_{\{\varphi_{i}\}} \beta$}
\RightLabel{iii}
\UIC{$\forall y((\varphi_{i}(\alpha) \& \varphi_{i}(y)) \supset \alpha \overset{+}{=}_{\{\varphi_{i}\}} y)$}
\RightLabel{iii}
\UIC{$\forall x\forall y((\varphi_{i}(x) \& \varphi_{i}(y)) \supset x \overset{+}{=}_{\{\varphi_{i}\}} y)$}
\DP
}}
\end{equation}
This case gives us the minimal degree of qualified uniqueness.  
For every $x$ and $y$, if they are $\varphi_{i}$, then they are identical 
with respect to every predicate in the singleton $\{\varphi_{i}\}$  
(i.e., they are indiscernible with respect to the predicate $\varphi_{i}$, 
but discernible with respect to any other predicate in $\mathcal{P}\setminus\{\varphi_{i}\}$). 
(Likewise for negative qualified uniqueness.) 
\end{remark}

Qualified definiteness allows for fine-tuning, since it involves qualified uniqueness. 
\begin{remark}
Let $\{\varphi_{i}\} \subset \mathcal{Q}^{\prime} \subset \mathcal{P}$,  
let $P = \varphi_{i}$, and $B = \varphi_{j}$ 
for $\varphi_{i}, \varphi_{j} \in \mathcal{Q}^{\prime}$, 
where $i, j \in \{1, ..., k\}$ and $i \not= j$. 
$P$: \textquoteleft ... is a pope\textquoteright;  
$B$: \textquoteleft ... is bald\textquoteright. 
And let 
$\mathcal{D}_{2}(i)$ [$\mathcal{D}_{2}(ii)$, $\mathcal{D}_{2}(iii)$] 
refer to the derivation for case (i) [(ii), (iii)] mentioned in the previous remark. 
We may, then, distinguish three general cases of qualified definiteness. 

Case (i). Maximal qualified definiteness:  
\begin{equation}
{\small{
\AXC{$\mathcal{D}_{1}$}
\noLine
\UIC{$\exists x \varphi_{i}(x)$}
\AXC{$\mathcal{D}_{2(i)}$}
\noLine
\UIC{$\forall u\forall v((\varphi_{i}(u) \& \varphi_{i}(v)) \supset u \overset{+}{=}_{\mathcal{P}} v)$}
\AXC{$\mathcal{D}_{3}$}
\noLine
\UIC{$\forall w(\varphi_{i}(w) \supset \varphi_{j}(w))$}
\RightLabel{($\iota_{\mathcal{P}}$I)}
\TIC{$\varphi_{j}(\iota_{\mathcal{P}} x\varphi_{i}(x))$}
\DP
}}
\end{equation}
The premisses of the $\iota_{\mathcal{P}}$I-application say 
that there is at least one thing which is $\varphi_{i}$, 
that any two things which are $\varphi_{i}$ are the same in any respect, and 
that everything that is $\varphi_{i}$ is $\varphi_{j}$. 
The conclusion 
$\varphi_{j}(\iota_{\mathcal{P}} x\varphi_{i}(x))$ can be read:
\textquoteleft the $\mathcal{P}$-unique $x$ 
	which is $\varphi_{i}$ is $\varphi_{j}$\textquoteright, or,  
simplifying the reading of Definition 2.4(1) further,    
\textquoteleft the only $\varphi_{i}$ is $\varphi_{j}$\textquoteright. 
We may use these readings only in case there is a single $x$ that is $\varphi_{i}$. 
This is definiteness proper. 
We use it for the analysis of (1.1), in case there is no schism. 

Case (ii). Intermediate qualified definiteness:  
\begin{equation}
{\small{
\AXC{$\mathcal{D}_{1}$}
\noLine
\UIC{$\exists x \varphi_{i}(x)$}
\AXC{$\mathcal{D}_{2(ii)}$}
\noLine
\UIC{$\forall u\forall v((\varphi_{i}(u) \& \varphi_{i}(v)) \supset 
u \overset{+}{=}_{\mathcal{Q}^{\prime}} v)$}
\AXC{$\mathcal{D}_{3}$}
\noLine
\UIC{$\forall w(\varphi_{i}(w) \supset \varphi_{j}(w))$}
\RightLabel{($\iota_{\mathcal{Q}^{\prime}}$I)}
\TIC{$\varphi_{j}(\iota_{\mathcal{Q}^{\prime}} x\varphi_{i}(x))$}
\DP
}}
\end{equation}
The premisses of the $\iota_{\mathcal{Q}^{\prime}}$I-application say 
that there is at least one thing which is $\varphi_{i}$, 
that any two things which are $\varphi_{i}$ are the same (only) 
in any $\mathcal{Q}^{\prime}$-respect, and that everything that is $\varphi_{i}$ is $\varphi_{j}$. 
The conclusion 
$\varphi_{j}(\iota_{\mathcal{Q}^{\prime}} x\varphi_{i}(x))$ can be read:
\textquoteleft the $\mathcal{Q}^{\prime}$-unique $x$ 
	which is $\varphi_{i}$ is $\varphi_{j}$\textquoteright, or simply 
\textquoteleft the $\varphi_{i}$ is $\varphi_{j}$\textquoteright. 
We may use these readings only in case there are at least two things 
that are $\varphi_{i}$ 
which are discernible with respect to $\mathcal{P}\setminus\mathcal{Q}^{\prime}$.
It will be natural to use this restricted kind of definiteness for the analysis of (1.1) 
in times of schism. 

Case (iii). Minimal qualified definiteness:  
\begin{equation}
{\small{
\AXC{$\mathcal{D}_{1}$}
\noLine
\UIC{$\exists x \varphi_{i}(x)$}
\AXC{$\mathcal{D}_{2(iii)}$}
\noLine
\UIC{$\forall u\forall v((\varphi_{i}(u) \& \varphi_{i}(v)) \supset u \overset{+}{=}_{\{\varphi_{i}\}} v)$}
\AXC{$\mathcal{D}_{3}$}
\noLine
\UIC{$\forall w(\varphi_{i}(w) \supset \varphi_{j}(w))$}
\RightLabel{($\iota_{\{\varphi_{i}\}}$I)}
\TIC{$\varphi_{j}(\iota_{\{\varphi_{i}\}} x\varphi_{i}(x))$}
\DP
}}
\end{equation}
The premisses of the $\iota_{\{\varphi_{i}\}}$I-application say 
that there is at least one thing which is $\varphi_{i}$, 
that any two things which are $\varphi_{i}$ are the same only with respect to $\{\varphi_{i}\}$, and 
that everything that is $\varphi_{i}$ is $\varphi_{j}$. 
	The conclusion 
$\varphi_{j}(\iota_{\{\varphi_{i}\}} x\varphi_{i}(x))$ can be read:
	\textquoteleft the $\{\varphi_{i}\}$-unique $x$ 
	which is $\varphi_{i}$ is $\varphi_{j}$\textquoteright. 
We may use this reading only in case there are at least two things 
that are $\varphi_{i}$ 
which are discernible with respect to $\mathcal{P}\setminus\{\varphi_{i}\}$.
In a sense, this minimal degree of definiteness comes close to generic definiteness: 
\textquoteleft the generic $\varphi_{i}$ is $\varphi_{j}$\textquoteright\space 
(e.g., \textquoteleft The Englishman is brave\textquoteright).  
Similarly for negative qualified definiteness.  
\end{remark}

\begin{remark}
A negative predication with a definite description: 
\begin{itemize}
\item[(1.3)] The king of France is not real.\\
	$-Real(\iota_{\mathcal{P}} x(King\text{-}of(x, France)))$
\end{itemize}	
Cf. Remark 4.2. 
\end{remark}

\section{Outlook}

Adapting the resources of \cite{iota1}, \cite{iota2} to 
the present framework, we may use it also for the analysis of 
constructions such as, e.g., (1.2), (1.4)-(1.6), 
and further challenging cases discussed in the literature. 
\begin{itemize}
\item[(1.4)] The dog descends from the wolf. (Cf. \cite{Ludlow}: (33).)\\ 
	$Descends$-$from(\iota_{\{Dog\}} x(Dog(x)), \iota_{\{Wolf\}} y(Wolf(y)))$ 

\item[(1.5)] The pope put the zucchetto on the zucchetto. (Cf. \cite{Ludlow}: (38).)\\
	$Put$-$on(
	\iota_{\mathcal{P}} x(Pope(x)), 
	\iota_{\mathcal{Q}^{\prime}} y(Zucchetto(y)), 
	\iota_{\mathcal{Q}^{\prime\prime}} z(Zucchetto(z))
	)$

\item[(1.6)] The man wearing the beret with the button is French. (\cite{Kuhn}: 450.)\\
	$French(\iota_{\mathcal{Q}} x(Man(x)$ $\&$ $Wears(x, \iota_{\mathcal{Q}} y(Beret(y)$ 
	$\&$ $Has(y, \iota_{\mathcal{Q}} z(Button(z)))))))$
\end{itemize}

\nocite{*}
\bibliographystyle{eptcs}
\bibliography{references}

\end{document}